\newcommand{\id}{\mathds{1}}
\DeclareMathOperator{\sgn}{sgn}
\newtheorem{lemma}{Lemma}
\newtheorem{theorem}{Theorem}
\begin{document}
\title{Classical Cost of Transmitting a Qubit}

\author{Martin J. Renner}
    \email{martin.renner@univie.ac.at}
    \affiliation{University of Vienna, Faculty of Physics, Vienna Center for Quantum Science and Technology (VCQ), Boltzmanngasse 5, 1090 Vienna, Austria}
    \affiliation{Institute for Quantum Optics and Quantum Information (IQOQI), Austrian Academy of Sciences, Boltzmanngasse 3, 1090 Vienna, Austria}
    
\author{Armin Tavakoli}
    \affiliation{Institute for Quantum Optics and Quantum Information (IQOQI), Austrian Academy of Sciences, Boltzmanngasse 3, 1090 Vienna, Austria}
\affiliation{Atominstitut,  Technische  Universit{\"a}t  Wien, Stadionallee 2, 1020 Vienna, Austria}
\affiliation{Physics Department, Lund University, Box 118, 22100 Lund, Sweden}

\author{Marco Túlio Quintino}
\affiliation{Sorbonne Université, CNRS, LIP6, F-75005 Paris, France}
\affiliation{Institute for Quantum Optics and Quantum Information (IQOQI), Austrian Academy of Sciences, Boltzmanngasse 3, 1090 Vienna, Austria}
\affiliation{University of Vienna, Faculty of Physics, Vienna Center for Quantum Science and Technology (VCQ), Boltzmanngasse 5, 1090 Vienna, Austria}


\begin{abstract}
We consider general prepare-and-measure scenarios in which Alice can transmit qubit states to Bob, who can perform general measurements in the form of positive operator-valued measures (POVMs). We show that the statistics obtained in any such quantum protocol can be simulated by the purely classical means of shared randomness and two bits of communication. Furthermore, we prove that two bits of communication is the minimal cost of a perfect classical simulation. In addition, we apply our methods to Bell scenarios, which extends the well-known Toner and Bacon protocol. In particular, two bits of communication are enough to simulate all quantum correlations associated to arbitrary local POVMs applied to any entangled two-qubit state. 
\end{abstract}

\maketitle
\textit{Introduction.---} Quantum resources enable a sender and a receiver to break the limitations of  classical communication. When entanglement is available, classical \cite{Cleve1997,Buhrman2001, Leung2010, Tavakoli2021} as well as quantum communication \cite{Bennett2002, Piveteau2022} can be boosted beyond purely classical models. A seminal example is dense coding, in which two classical bits can be substituted for a single qubit and shared entanglement \cite{bennettdensecoding}. However, entanglement is not necessary for quantum advantages. Communicating an unassisted $d$-dimensional quantum system frequently outperforms the best conceivable protocols based on a classical $d$-dimensional system \cite{Buhrman10_review,Brassard2003, Vidick2011, Tavakoli2015, Navascues2015a}; even yielding advantages growing exponentially in $d$ \cite{hiddenmatching, raz1999}. Already in the simplest meaningful scenario, namely that in which the communication of a bit is substituted for a qubit, sizable advantages are obtained in important tasks like Random Access Coding \cite{WiesnerRAC1983, Ambainis2002, Ambainis2008}. These qubit advantages propel a variety of quantum information applications \cite{gallego10, Pawlowski2011, Li2011, Woodhead2015, Tavakoli2018}. 

It is natural to explore the fundamental limits of quantum over classical advantages. In order to do so, one has to investigate the amount of classical communication required to model the predictions of quantum theory. Previous works consider not only the scenario of sending quantum systems \cite{cerf2000, massar2001, Pati2000, tonerbacon2003, Methot2004}, but also simulating bipartite \cite{Maudlin1992, Brassard1999, gisingisin1999, Steiner2000, massar2001, Pati2000, cerf2000, tonerbacon2003, Methot2004, degorre2005, Degorre2007, Regev2010}, as well as multipartite entangled quantum systems \cite{Branciard2011, Branciard2012, Brassard2015, Brassard2019}. While such classical simulation of quantum theory is in general challenging, a breakthrough was made by Toner and Bacon \cite{tonerbacon2003}. Their protocol shows that any quantum prediction based on standard, projective, measurements on a qubit can be simulated by communicating only two classical bits. However, this does not account for the full power of quantum theory. More precisely, there exists qubit measurements that cannot be reduced to stochastic combinations of projective ones \cite{DAriano2005}. The most general measurements are known as positive operator-valued measures (POVMs). Physically, they correspond to the receiver interacting the message qubit with a locally prepared auxiliary qubit, and then performing a measurement on the joint system \cite{nielsen00}. Such POVMs are even indispensable for important tasks like unambiguous state discrimination \cite{IVANOVIC1987257, PERES198819} and hold a key role in many quantum information protocols (see e.g.~\cite{banaszek99, Renes2004, Vertesi2010, Bent2015, Acin2016, Curchod2017, Bae2019, Armin2020, Tavakoli2021b}). Importantly, they also give rise to correlations that cannot be modelled in any qubit experiment based on projective measurements \cite{Navascues2015, Tavakoli_2020, Mironowicz_2019, Steinberg_2021, Martinez2022}. 

This naturally raises the question of identifying the classical cost of simulating the most general predictions of quantum theory, based on POVMs. In the minimal qubit communication scenario, one may suspect that this cheap price of only two bits is due to the restriction to the, fundamentally binary, projective measurements. In contrast, when measurements are general POVMs, it is even unclear whether the classical simulation cost is finite. Notably, previous work has shown that there exists a classical simulation that requires 5.7 bits of communication on average \cite{cerf2000,Methot2004}. However, that protocol has a certain probability to fail in each round, leading to an unbounded amount of communication in the worst case.



In this work, we explicitly construct a classical protocol that simulates all qubit-based correlations in the prepare-and-measure scenario by using only two bits of communication. Thus, we find that the cost of a classical simulation remains the same when considering the most general class of measurements, although POVMs enable more general quantum correlations than projective measurements. Moreover, we show that two bits is the minimal classical simulation cost, i.e.~there exists no classical simulation that uses less communication than our protocol. This is shown through an explicit quantum protocol, based on qubit communication, that eludes simulation with a ternary classical message. Finally, we apply our methods to Bell nonlocality scenarios \cite{Brunner_2014}. We present novel protocols that simulate the statistics of local measurements on entangled qubit pairs.

\begin{figure}[ht]
    \centering
    \includegraphics[width=\columnwidth]{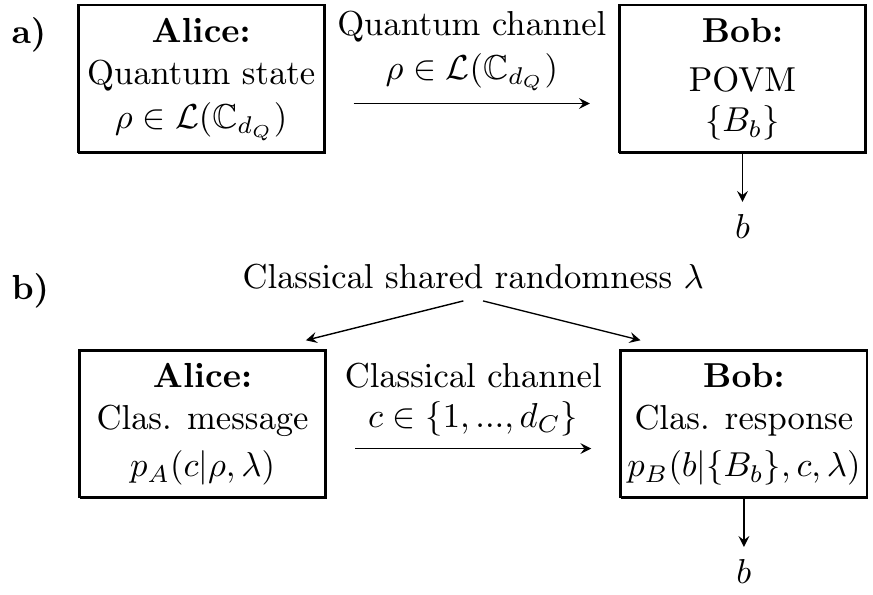}
    \caption{ \textbf{a)} Quantum PM scenario: Alice sends a $d_Q$-dimensional state to Bob who performs a POVM to obtain his outcome. \textbf{b)} Classical PM scenario for simulating the quantum PM scenario: The classical simulation is successful if, for every state and POVM, the probability that Bob outputs $b$ is the same as in the quantum protocol.} \label{fig:PM}
\end{figure}

\textit{The prepare-and-measure scenario.---} A quantum prepare-and-measure (PM) scenario (see Fig.~\ref{fig:PM}~a)) consists of two steps. First, Alice prepares an arbitrary quantum state of dimension $d_Q$ and sends it to Bob. The state is described by a positive semidefinite $d_Q\times d_Q$ complex matrix $\rho\in\mathcal{L}(\mathbb{C}_{d_Q})$, $\rho\geq 0$ with unit trace $\tr(\rho)=1$. Second, Bob receives the state and performs an arbitrary quantum measurement on it, obtaining an outcome $b$. General quantum measurements are described by a POVM, which is a set of operators $\{B_b\}$ that are positive semidefinite, $B_b\geq 0$ and sum to the identity, $\sum_b B_b=\mathds{1}$. In quantum theory, the probability of outcome $b$ when performing the POVM $\{B_b\}$ on the state $\rho$ is given by Born's rule, 
\begin{align}
	p_Q(b|\rho,\{B_b\})=\tr(\rho\, B_b)\, .\label{Born}
\end{align}

We are interested in constructing classical models for the PM scenario that simulate the predictions of quantum theory, i.e.~classical models that reproduce the probability distribution \eqref{Born}. In a classical simulation (see Fig.~\ref{fig:PM}~b)), Alice and Bob may share a random variable $\lambda$ subject to some probability function $\pi(\lambda)$. This allows them to correlate their classical communication strategies. Alice uses $\lambda$ and her knowledge of the quantum state $\rho$ to choose a classical message $c$ selected from a $d_C$-valued alphabet $\{1,\ldots,d_C\}$. Since the selection can be probabilistic, her actions are described by the conditional probability distribution $p_A(c|\rho,\lambda)$. When Bob receives the message, he uses $\lambda$ and his knowledge of the POVM $\{B_b\}$ to choose his outcome $b$. Again, this choice can be probabilistic and is therefore described by a conditional probability distribution $p_B(b|\{B_b\},c,\lambda)$. All together, the correlations obtained from the classical model become

\small
\begin{align}
p_C(b|\rho,\{B_b\})=
	 \int_\lambda   \text{d}\lambda \; \pi(\lambda) \sum_{c=1}^{d_C}  p_A(c|\rho,\lambda) p_B (b|\{B_b\},c,\lambda) \, . \label{eq:PMdef}
\end{align}
\normalsize

The simulation is successful if, for any choice of $d_Q$-dimensional states and POVMs, the quantum predictions $p_Q$ can be reproduced with a classical model using messages that attain at most $d_C$ different values. That is, if there exists a $d_C$ and suitable encodings $p_A$ and decodings $p_B$, such that
\begin{equation}
\forall \rho, \{B_b\}: \quad p_C(b|\rho,\{B_b\})=p_Q(b|\rho,\{B_b\}) \, .
\end{equation}
If this holds, we say that the classical model simulates quantum theory. In particular, we say that the classical simulation is minimal if no classical simulation is possible using a smaller message alphabet size $d_C$. Furthermore, we remark that for some PM scenarios, shared randomness may be charged as a non-free resource, leading to different results and problems \cite{massar2001,Galvao_2003,Ambainis2008, bowles14, bowles15, vicente17, Armin2020, Steinberg_2021, Krishna2022}. In fact, for the PM scenario we study here, it is known that an infinite amount of shared randomness is required in order to perform the task with finite classical communication \cite{massar2001}.

Our focus is on the most fundamental scenario, namely that based on qubits ($d_Q=2$). Notice that there exists a trivial classical simulation in which Alice sends the Bloch vector coordinates of her quantum state to Bob. After that, he can classically compute the Born rule and samples his outcome accordingly. However, sending the coordinates requires an infinite amount of communication ($d_C$ unbounded). Whether a classical simulation is possible with a finite value of $d_C$ is much less trivial. Notably, the simulation protocol of Toner and Bacon showed that if we additionally restrict the quantum measurements to be projective, i.e.~$B_b^2=B_b$, a classical simulation with $d_C=4$ (two bits) is possible \cite{tonerbacon2003}.

We also remark that here we consider a scenario where Bob does not know Alice’s state and Alice does not know Bob’s measurement beforehand. This scenario, where Alice and Bob can independently choose between different states and measurements, is even required to provide quantum over classical advantages in several tasks~\cite{WiesnerRAC1983, Ambainis2002, Ambainis2008, raz1999, hiddenmatching}. An interesting related scenario is the one where Bob’s measurement is known by Alice, or, equivalently, Bob has only a single choice of measurement. In that case, Frenkel and Weiner \cite{FrenkelWeiner2015} proved that, in the presence of shared randomness, a $d$-dimensional quantum system can always be perfectly simulated by a $d$-dimensional classical system. This powerful result inspired proposals such as the "No-Hypersignaling" principle~\cite{DallArno2017}, which is respected by quantum theory. In what follows, we find a minimal classical simulation for general qubit protocols.
		
\textit{Classical simulation protocol.---}\label{secprotocol} 
Qubit states $\rho$ can be represented as $\rho=\left(\openone+\vec{x}\cdot \vec{\sigma}\right)/2$, where $\vec{x}\in\mathbb{R}^3$ is a three-dimensional real vector such that  $|\vec{x}|\leq 1$, and $\vec{\sigma}=(\sigma_X,\sigma_Y,\sigma_Z)$ are the standard Pauli matrices. We may, without loss of generality, restrict ourselves to quantum protocols based on pure states. This corresponds to unit vectors $|\vec{x}|=1$. Since mixed states are convex combinations of pure states, every classical simulation protocol applicable to pure states can immediately be extended to apply also to mixed states. The classical randomness in the convex combination can simply be absorbed in the shared randomness of the simulation protocol.
%
%
Similarly, because every qubit POVM can be written as a coarse graining of rank-1 projectors~\cite{Barrett2002}, we may restrict ourselves to POVMs proportional to rank-1 projectors. Thus, we write Bob's measurements as $B_b=2p_b \ketbra{\vec{y}_b}$, where $p_b\geq0$, $\sum_b p_b=1$ and $\ketbra{\vec{y}_b}=\big(\mathds{1} + \vec{y}_b \cdot \vec{\sigma}\big)/2$ for some normalized vector $\vec{y}_b\in\mathbb{R}^3$. In Bloch notation we have
	\begin{equation}\label{qubitcorrelation}
   \tr(\rho B_b)=p_b(1+\vec{x}\cdot \vec{y}_b) .
\end{equation}	

We now present a classical simulation protocol in which Alice and Bob can perfectly simulate all qubit correlations at the cost of two bits of communication. To this end, it is handy to first define the Heaviside function, defined by $H(z)=1$ when $z\geq0$ and $H(z)=0$ when $z<0$, as well as the related function $\Theta(z):=z\cdot H(z)$. 
Consider now the following protocol.
\begin{enumerate}
	\item  Alice and Bob share two normalized vectors $\vec{\lambda}_1,\vec{\lambda}_2\in\mathbb{R}^3$, which are uniformly and independently distributed on the unit radius sphere $S_2$.
	\item Instead of sending a pure qubit $\rho={\big(\mathds{1} + \vec{x}\cdot\vec{\sigma}\big)/2}$, Alice prepares two bits via the formula ${c_1=H(\vec{x}\cdot \vec{\lambda}_1)}$ and $c_2=H(\vec{x}\cdot \vec{\lambda}_2)$ and sends them to Bob.
	\item Bob flips each vector $\vec{\lambda}_i$ when the corresponding bit $c_i$ is zero. More formally, he sets $\vec{\lambda}^{'}_i:=(-1)^{1+c_i}\vec{\lambda}_i$.
	\item Instead of performing a POVM with elements ${B_b=2p_b} \ketbra{\vec{y}_b}$, Bob picks one vector $\vec{y}_b$ from the set $\{\vec{y}_b\}$ according to the probabilities $\{p_b\}$. Then he sets $\vec{\lambda}:=\vec{\lambda}^{'}_1$ if $|\vec{\lambda}^{'}_1\cdot \vec{y}_b|\geq |\vec{\lambda}^{'}_2\cdot \vec{y}_b|$ and $\vec{\lambda}:=\vec{\lambda}^{'}_2$ otherwise. Finally, Bob outputs $b$ with probability
\begin{align} \label{eq:BobResponse}
    p_B(b|\{B_b\},\vec{\lambda})=\frac{p_b\  \Theta(\vec{y}_b\cdot \vec{\lambda})}{\sum_{j} \ p_j\  \Theta(\vec{y}_j\cdot \vec{\lambda})}  \, .
\end{align}
\end{enumerate}
The proof that the protocol perfectly reproduces the qubit correlations \eqref{qubitcorrelation} is given in Appendix~\ref{appendixa}. A sketch of the first three steps of the protocol is given in Fig.~\ref{fig2}.
\begin{figure}[ht]
    \centering
    \includegraphics[width=\columnwidth]{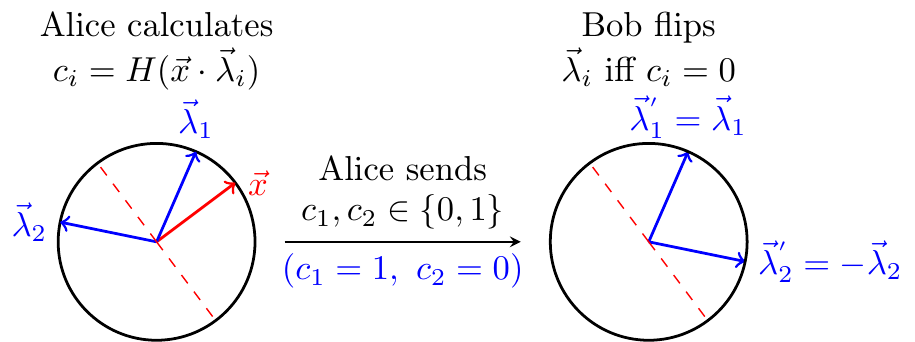}
    \caption{A two-dimensional illustration of the first three steps in the classical simulation protocol based on two bits. }
    \label{fig2}
\end{figure}
After the third step, the two vectors $\vec{\lambda}^{'}_1$ and $\vec{\lambda}^{'}_2$ are uniformly and independently distributed in the positive hemisphere defined by $\vec{x}$, i.e. their probability densities are $\rho(\vec{\lambda}^{'}_i)={H(\vec{x}\cdot \vec{\lambda}^{'}_i)/(2\pi)}$. As we show, this distribution is enough for Bob to classically reproduce the statistics of every POVM applied to the qubit state associated to $\vec{x}$.  Furthermore, in Appendix~\ref{appendixa} we also present a modified version of that protocol. There, Bob sends first one bit to Alice and then Alice sends one bit back to Bob.

\textit{Two bits are necessary for a classical simulation.---} We have shown that two classical bits are sufficient to simulate qubit correlations. We now prove that they are also necessary, i.e.~that the above classical simulation protocol is minimal. 

To this end, we show that there exists correlations in the qubit PM scenario that cannot be modelled in any classical protocol \eqref{eq:PMdef} that uses ternary messages ($d_C=3$). For this purpose, we consider PM scenarios with a fixed number of inputs for Alice and Bob. Alice selects her input from a set $x\in\{1,\ldots,I_A\}$ and prepares the qubit $\rho_x$. Bob selects his input from a set $y\in\{1,\ldots,I_B\}$ and performs the two-outcome projective measurement $\{B_{b|y}\}$ with outcomes labelled by $b\in\{1,2\}$. The qubit correlations are then given by $p_Q(b|x,y)=\tr(\rho_x B_{b|y})$. Notice that although Bob could perform POVMs, we are restricting ourselves to projective measurements. These turn out to be sufficient for the proof. 

It is key to recognise that the task of deciding whether a given $p(b|x,y)$ admits a classical simulation with a $d_C$-dimensional message alphabet can be solved by means of linear programming. From the duality theory of linear programming \cite{boyd_book}, we can obtain a classical dimension witness that certifies that the probabilities $p_Q(b|x,y)$ cannot be simulated by sending classical ternary messages. A classical dimension witness~\cite{raz1999,gallego10} is a linear inequality which is respected by \textit{all} classical models in the PM scenario for a given $d_C$. This can in general be written as
\begin{align}
	\sum_{b,x,y} \gamma(b|x,y) p_C(b|x,y) \leq C_d\, ,
\end{align}	
for some coefficients $\gamma(b|x,y)\in\mathbb{R}$. Here, $C_d$ is the classical bound. A violation of this inequality certifies that no classical model using $d_C$ symbols can simulate $p_Q(b|x,y)$. In Appendix~\ref{appendixlinprog}, we detail these linear programming methods. Inspired by the efficient method to find local bounds of Bell inequalities presented in Ref.~\cite{araujo20}, we provide a new and efficient algorithm to obtain the classical $d_C$-dimensional bound $\leq C_d$ for any given set of coefficients $\{\gamma(b,x,y)\}$. Also, drawing inspiration from  Ref.~\cite{bavaresco21}, we developed computational methods to convert the numerical solutions obtained from standard solvers to rigorous computer-assisted proofs which do not suffer from numerical precision issues due to floating point arithmetic.

In this way, we have obtained several examples of qubit states and measurements that generate quantum correlations $p_Q(b|x,y)$ that do not admit a classical model for $d_C=3$. An elegant example is obtained from considering $I_A=6$ states that form an octahedron on the Bloch sphere. They correspond to the eigenstates of the three Pauli operators $(\sigma_X, \sigma_Y, \sigma_Z)$. 
We let Bob perform $I_B=24$ different projective measurements. The Bloch vectors of these measurements are oriented such that they point to the vertices of a snub cube  \cite{snub}, which is an Archimedean solid, inscribed in the Bloch sphere. This may be viewed as a PM variant of Platonic Bell inequality violations \cite{Gisin2020}. Specifically, the 24 measurement directions are obtained as follows. Let $\tau$ be the one real root of the polynomial $x^3-x^2-x-1$, known as the Tribonacci constant. Take all even (odd) permutations of,  $(\pm 1, \pm 1/\tau, \pm \tau)$ and for each permutation, take only the four sign combinations that have an even (odd) number of  ``+''. This gives all vertices of the snub cube. Finally, do a global rotation by $60$ degrees in the XY-plane, i.e.~apply the unitary $U=\ketbra{0}{0}+e^{\frac{i\pi}{3}}\ketbra{1}{1}$ to all projectors. The linear programming methods reveal that the resulting $p_Q$ has no classical model for $d_C=3$.

In Appendix~\ref{appendixlinprog}, we discuss a heuristic aproach to find states and measurements leading to probabilities which do not admit a classical simulation for $d_C=3$. Fixing the above six preparations, the sparsest proof we have found uses eleven measurements that correspond to the solution of the Thomson problem \cite{Thomson}. All our computational code is openly available at the online repository~\cite{mtqGIT}.

Although no ternary message protocol is sufficient, it may still be that a classical simulation is possible by sending less than two bits on average. For example, Alice may restrict herself to send in some fraction of rounds only a trit, a bit or no communication at all. For the case of sometimes sending a bit or less, we show in Appendix~\ref{apponebit} that no classical simulation is possible. The reason is closely connected to the zero local weight of the singlet state, also known as the EPR2  decomposition~\cite{elitzur1992,barrett2006}. Our argument shows that, if one could simulate qubit correlations by sometimes sending only a bit or less, one could construct a protocol that simulates the singlet state without communication in these rounds. This would induce a local part for the singlet state, which contradicts the EPR2 decomposition.

\textit{Simulating  nonlocality.---} It is straightforward to adapt our classical protocol to simulate the statistics obtained from arbitrary local POVMs on any entangled qudit-qubit state. Indeed, all PM protocols can be adapted to Bell scenarios \cite{cerf2000}. For that, Alice chooses her measurement, an arbitrary POVM on a $d_Q$-dimensional quantum system. Then, she produces an output according to the marginal distribution of her POVM elements and, depending on her outcome, calculates the post-measurement state of Bob's qubit. Finally, she simply uses the classical protocol for the PM scenario to send that qubit state to Bob. Thus, our protocol immediately extends the best previously known one, due to Toner and Bacon \cite{tonerbacon2003}, to Bell scenarios involving POVMs. At the same time, we use the same amount of classical communication, in fact, two bits.

However, Toner and Bacon also show that only a single bit is necessary to simulate local projective measurements on a qubit pair in the singlet state $\ket{\Psi^-}=(\ket{01}-\ket{10})/\sqrt{2}$. We can also extend that result by constructing a novel one bit protocol. Here, Alice is restricted to projective measurements with outcomes $a=\pm 1$, but Bob can perform arbitrary POVMs.
\begin{enumerate}
	\item  Alice and Bob share two normalized vectors $\vec{\lambda}^{'}_1,\vec{\lambda}_2\in\mathbb{R}^3$, which are uniformly distributed on the unit radius sphere $S_2$. 
	\item Instead of performing a projective measurement with projectors $\ketbra{\pm\vec{x}}=\ (\mathds{1}\pm\vec{x}\cdot \vec{\sigma})/2$, Alice outputs $a=-\sgn(\vec{x}\cdot \vec{\lambda}^{'}_1)$ and sends the bit $c=\sgn(\vec{x}\cdot \vec{\lambda}^{'}_1)\cdot \sgn(\vec{x}\cdot \vec{\lambda}_2)$ to Bob. Here, $\sgn(z)=1$ when $z\geq 0$ and $\sgn(z)=-1$ when $z<0$.
	\item Bob flips the vector $\vec{\lambda}_2$ if and only if $c=-1$. More formally, he sets $\vec{\lambda}^{'}_2:=c\, \vec{\lambda}_2$.
	\item Same as "Step 4" in the original prepare-and-measure protocol.
\end{enumerate}
Since $\vec{\lambda}^{'}_1$ is uniformly distributed on $S_2$, we obtain the correct marginal probabilities $p(a)=1/2$ for Alice. Furthermore, when Alice outputs $a=+1$, $\vec{\lambda}^{'}_1$ and $\vec{\lambda}^{'}_2$ are distributed on $S_2$ according to $\rho(\vec{\lambda}^{'}_i)=H(-\vec{x}\cdot \vec{\lambda}^{'}_i)/(2\pi)$. This corresponds precisely to a classical description of Bob's post-measurement state $-\vec{x}$ (compare with the text below Fig.~\ref{fig2}). When Alice outputs $a=-1$, the two vectors are distributed according to $\rho(\vec{\lambda}^{'}_i)=H(+\vec{x}\cdot \vec{\lambda}^{'}_i)/(2\pi)$, which corresponds to the correct post-measurement state $+\vec{x}$. Therefore, Bob can apply the same response function ("Step 4") as in the original PM protocol, which immediately yields the correct quantum probabilities. 
Additionally, since singlet correlations have no local part \cite{elitzur1992,barrett2006}, one bit of communication is necessary in each round, ensuring the optimality of this protocol. Clearly, this protocol can be easily adapted to any maximally entangled qubit pair by rotating either Alice's or Bob's measurement basis.

\begin{table}
    \centering
    \begin{tabular}{c|c|c}
         Scenario& This work & Ref.~ \cite{tonerbacon2003}  \\\hline
         PM with qubit
         & 2 bits, POVMs & 2 bits, only Proj.\\
         Bell with 2 qubits & 2 bits, POVMs & 2 bits, only Proj.\\
         Bell with singlet & 1 bit, Proj.-POVM &  1 bit, Proj.-Proj. \\
    \end{tabular}
    \caption{Comparison between our protocol and the one by Toner and Bacon, previously the best protocol for these scenarios but restricted to only projective measurements (denoted as Proj. in this table) on Bob's side. Our protocols use the same resources, but Bob is allowed to perform POVMs.}
    \label{comparison}
\end{table}

\textit{Discussion.---} We have proven that two bits of communication are necessary and sufficient in order to classically simulate the most general predictions of quantum theory in a qubit prepare-and-measure scenario. Our results also have immediate implications for simulations of nonlocality in scenarios featuring POVMs. In this way, we generalised the well-known protocols of Toner and Bacon \cite{tonerbacon2003} from projective measurements to the most general qubit measurements (POVMs). Interestingly, this comes with no increase in the classical cost. See Table~\ref{comparison} for an overview.


A natural direction is to consider classical simulations for higher-dimensional quantum PM scenarios ($d_Q>2$), or scenarios involving entanglement. Notably, the latter can sometimes be isomorphic to the former \cite{Pauwels2022b}. Although this has received some attention \cite{Degorre2007, Regev2010, Brassard2019, Frenkel2022}, few general results are known. Most notably, it is still an open problem whether a qutrit ($d_Q=3$) PM scenario can be classically simulated with a finite message alphabet ($d_C<\infty$).

\begin{acknowledgments}
We thank \v{C}aslav Brukner, Valerio Scarani,  Peter Sidajaya, Isadora Veeren, Bai Chu Yu for fruitful discussions.
M.J.R. and M.T.Q. acknowledge financial support from the Austrian Science Fund (FWF) through BeyondC (F7103-N38), the Project No. I-2906, as well as support by the John Templeton Foundation through Grant 61466, The Quantum Information Structure of Spacetime (qiss.fr), the Foundational Questions Institute (FQXi) and the research platform TURIS. The opinions expressed in this publication are those of the authors and do not necessarily reflect the views of the John Templeton Foundation.
This project has received funding from the European Union’s Horizon 2020 research and innovation programme under the Marie Skłodowska-Curie grant agreement No 801110. It reflects only the authors' view, the EU Agency is not responsible for any use that may be made of the information it contains. ESQ has received funding from the Austrian Federal Ministry of Education, Science and Research (BMBWF).
A. T. is supported by the Wenner-Gren Foundation and by the Knut and Alice Wallenberg Foundation through the Wallenberg Center for Quantum Technology (WACQT).
\end{acknowledgments}


\nocite{apsrev42Control} 
\bibliographystyle{0_MTQ_apsrev4-2_corrected}
\bibliography{bib.bib}

\clearpage
\onecolumngrid
\appendix

\section{Proof of classical simulation protocol}\label{appendixa}
In this section, we prove that the classical  protocol based on two bits simulates qubit correlations in the PM scenario. First, we give a modified version of the protocol and show that both versions lead to the same statistics. 

\subsection{Modified version of the protocol}
The modified protocol, in which Bob sends one bit to Alice and afterwards Alice sends only one bit back to Bob, is:
	
\begin{enumerate}
	\item  Alice and Bob share two normalized vectors $\vec{\lambda}_1,\vec{\lambda}_2\in\mathbb{R}^3$, which are uniformly distributed on the unit radius sphere $S_2$.
	\item Instead of performing a POVM with elements ${B_b=2p_b} \ketbra{\vec{y}_b}$, Bob picks one vector $\vec{y}_b$ from the set $\{\vec{y}_b\}$ according to the probabilities $\{p_b\}$. Then he sets $k=1$ if $|\vec{\lambda}_1\cdot \vec{y}_b|\geq |\vec{\lambda}_2\cdot \vec{y}_b|$ and $k=2$ otherwise. Afterwards, he sends the bit $k$ to Alice.
	\item Given that $\rho={\big(\mathds{1} + \vec{x}\cdot\vec{\sigma}\big)}/2$ is the pure qubit state Alice wants to send, she only sends the bit $c_k=H(\vec{x}\cdot \vec{\lambda}_{k})$ to Bob.
	\item Bob flips the vector $\vec{\lambda}_{k}$ if the bit $c_k$ is zero. More formally, he sets $\vec{\lambda}:=(-1)^{1+c_k}\vec{\lambda}_{k}$.
	\item Finally, Bob outputs $b$ with probability
\begin{align} 
    p_B(b|\{B_b\},\vec{\lambda})=\frac{p_b\  \Theta(\vec{y}_b\cdot \vec{\lambda})}{\sum_{j} \ p_j\  \Theta(\vec{y}_j\cdot \vec{\lambda})}  \, . \label{EQA1}
\end{align}
\end{enumerate}

In the original version, Alice sends the two bits $c_1$ and $c_2$ that Bob needs to define the two vectors $\vec{\lambda}^{'}_i:=(-1)^{1+c_i}\vec{\lambda}_{i}$. Afterwards, Bob chooses one of the two vectors $\vec{\lambda}^{'}_i$ according to the test $|\vec{\lambda}^{'}_1\cdot \vec{y}_b|\geq |\vec{\lambda}^{'}_2\cdot \vec{y}_b|$ and proceeds only with the chosen vector $\vec{\lambda}:=\vec{\lambda}^{'}_k$. However, Bob's choice does only depend on the two vectors $\vec{\lambda}_1$ and $\vec{\lambda}_2$ but not on the bits $c_i$ he received from Alice since $|\vec{\lambda}^{'}_i\cdot \vec{y}_b|= |(-1)^{1+c_i}\vec{\lambda}_{i}\cdot \vec{y}_b|= |\vec{\lambda}_i\cdot \vec{y}_b|$. This observation allows us to modify the protocol. In the modified version, he makes his choice between $\vec{\lambda}_1$ and $\vec{\lambda}_2$ first. Afterwards, he informs Alice of his choice $\vec{\lambda}_k$ and Alice sends only the bit $c_k$ to Bob. This is enough for Bob to define the same $\vec{\lambda}:=(-1)^{1+c_k}\vec{\lambda}_{k}$.

\subsection{Proof of simulation protocol}
Before we present the proof, we show that the protocol is well-defined. More precisely, we can check that $p_B(b|\{B_b\},\vec{\lambda})$ are well-defined probabilities. In order to see this, note that $0\leq p_B(b|\{B_b\},\vec{\lambda})\leq 1$ follows from $\Theta(z)\geq 0$ (for every $z\in \mathbb{R}$) and $p_j\geq 0$ (for every $j$). Furthermore, we can check that
\begin{align}
    \sum_{b} p_B(b|\{B_b\},\vec{\lambda})=\frac{\sum_{b} \ p_b\  \Theta(\vec{y}_b\cdot \vec{\lambda})}{\sum_{j} \ p_j\  \Theta(\vec{y}_j\cdot \vec{\lambda})}=1 \, ,
\end{align}
to ensure normalisation.\\

\begin{theorem}
The above protocol reproduces the correct quantum probabilities. More precisely, for a given pure qubit state $\rho={ \big(\mathds{1} + \vec{x}\cdot\vec{\sigma}\big)}/2$ and POVM elements ${B_b=2p_b} \ketbra{\vec{y}_b}$, the total probability that Bob outputs $b$ is
\begin{align}
    p_C(b|\rho,\{B_b\})=p_b(1+\vec{x}\cdot \vec{y}_b)=\tr(\rho\, B_b)=p_Q(b|\rho,\{B_b\}) \, .
\end{align}
\end{theorem}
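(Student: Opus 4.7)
\emph{Strategy.} I will compute $p_C(b|\rho,\{B_b\})$ by first identifying the marginal distribution of $\vec{\lambda}$ produced by steps~1--4 of the protocol and then applying Bob's response rule \eqref{eq:BobResponse}. The crucial observation is that \eqref{eq:BobResponse} depends on the ancillary label $b_0$ (the index of the $\vec{y}_{b_0}$ that Bob samples with probability $p_{b_0}$ in step~4) only through the resulting vector $\vec{\lambda}$, not directly on $b_0$ itself. I may therefore marginalize $b_0$ first, reducing the whole calculation to a single integral over the sphere.

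\emph{Density of $\vec{\lambda}$.} The excerpt already records that after step~3 the vectors $\vec{\lambda}'_1,\vec{\lambda}'_2$ are independent and uniform on the positive hemisphere $\mathcal{H}^+=\{\vec{u}\in S_2:\vec{x}\cdot\vec{u}\geq 0\}$. Because the sign flip leaves $|\vec{y}_{b_0}\cdot\vec{\lambda}'_i|$ invariant, this quantity is distributed as $|\vec{y}_{b_0}\cdot\vec{\lambda}|$ with $\vec{\lambda}$ uniform on the full sphere, which is uniform on $[0,1]$. A short order-statistic computation then shows that, conditioned on $b_0$, the vector $\vec{\lambda}$ chosen by the argmax has density $|\vec{y}_{b_0}\cdot\vec{\lambda}|/\pi$ on $\mathcal{H}^+$. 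Averaging over $b_0$ yields
\begin{equation*}
\mu(\vec{\lambda})=\frac{H(\vec{x}\cdot\vec{\lambda})}{\pi}\sum_{b_0} p_{b_0}\,|\vec{y}_{b_0}\cdot\vec{\lambda}|.
\end{equation*}

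\emph{Key cancellation.} Substituting $\mu$ and \eqref{eq:BobResponse} into $p_C(b)=\int d\vec{\lambda}\,\mu(\vec{\lambda})\,p_B(b|\{B_b\},\vec{\lambda})$ leaves the factor $\sum_j p_j\Theta(\vec{y}_j\cdot\vec{\lambda})$ in the denominator. I will dispose of it using the identity $|z|=2\Theta(z)-z$ together with POVM completeness $\sum_b B_b=\id$, which in the Bloch parametrization $B_b=p_b(\id+\vec{y}_b\cdot\vec{\sigma})$ forces $\sum_j p_j\vec{y}_j=\vec{0}$. The linear piece drops out, giving the identity
\begin{equation*}
\sum_j p_j\,|\vec{y}_j\cdot\vec{\lambda}|=2\sum_j p_j\,\Theta(\vec{y}_j\cdot\vec{\lambda}),
\end{equation*}
which cancels the denominator exactly. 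What remains is $p_C(b)=(2p_b/\pi)\int_{\mathcal{H}^+}\Theta(\vec{y}_b\cdot\vec{\lambda})\,d\vec{\lambda}$. The last integral is a standard spherical computation: splitting on the signs of $\vec{x}\cdot\vec{\lambda}$ and $\vec{y}_b\cdot\vec{\lambda}$ and invoking $\int_{S_2}\Theta(\vec{y}\cdot\vec{\lambda})\,d\vec{\lambda}=\pi$ together with $\int_{\mathcal{H}^+}\vec{\lambda}\,d\vec{\lambda}=\pi\vec{x}$ evaluates it to $(\pi/2)(1+\vec{x}\cdot\vec{y}_b)$. Putting everything together gives $p_C(b)=p_b(1+\vec{x}\cdot\vec{y}_b)=\tr(\rho B_b)$, as claimed.

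\emph{Main obstacle.} At first glance the ratio in \eqref{eq:BobResponse}, integrated against the density $\mu$, looks intractable. I expect the critical step to be noticing that POVM normalization $\sum_j p_j\vec{y}_j=\vec{0}$ converts the $|\cdot|$-sum appearing in $\mu$ into twice the $\Theta$-sum in the denominator, producing an exact cancellation. Once this algebraic identity is spotted, the remaining task is purely spherical geometry. A secondary, minor care point is verifying that the order-statistic density $|\vec{y}_{b_0}\cdot\vec{\lambda}|/\pi$ on $\mathcal{H}^+$ is derived correctly from the hemisphere-uniform law of the $\vec{\lambda}'_i$.
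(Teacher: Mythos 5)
Your proof is correct and follows essentially the same three-step structure as the paper's: derive the density $\frac{H(\vec{x}\cdot\vec{\lambda})}{\pi}\sum_{b}p_{b}|\vec{y}_{b}\cdot\vec{\lambda}|$ of the selected vector, use POVM normalization $\sum_j p_j\vec{y}_j=\vec{0}$ to convert the $|\cdot|$-sum into twice the $\Theta$-sum so that the denominator of Bob's response rule cancels, and then evaluate the remaining hemispherical integral to $(\pi/2)(1+\vec{x}\cdot\vec{y}_b)$. The only surface differences are that you work directly in the original protocol ordering (rather than the modified one the paper analyzes), you derive the order-statistic density yourself instead of citing Degorre et al., and you use the identity $|z|=2\Theta(z)-z$ in place of the paper's $|z|=\Theta(z)+\Theta(-z)$ plus a symmetry argument, and you evaluate the final integral by known spherical averages rather than explicit spherical coordinates — all cosmetic.
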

\begin{proof}
To check that the protocol outputs the correct probabilities, it is slightly more convenient to follow the modified version. We go through all the steps of the protocol and determine first the distribution of the vector $\vec{\lambda}_k$ after "Step 2", second the distribution of Bob's chosen vector $\vec{\lambda}$ after he performed "Step 4" and third we calculate the total probability that he outputs $b$ in "Step 5".\\

\textit{1. Distribution of $\vec{\lambda}_k$ after "Step 2":}\\
Alice and Bob share two vectors uniformly and independently distributed along the unit sphere $\vec{\lambda}_1,\vec{\lambda}_2 \in S_2$. Consider a round in which Bob has picked the POVM element that corresponds to the vector $\vec{y}_b$. Then he is choosing the vector $\vec{\lambda}_1$ if $|\vec{\lambda}_1\cdot \vec{y}_b|\geq |\vec{\lambda}_2\cdot \vec{y}_b|$ and $\vec{\lambda}_2$ otherwise. It follows from Degorre et al. "Theorem 6 (The “choice” method)" \cite{degorre2005} that the resulting distribution of the chosen vector $\vec{\lambda}_k$ is exactly:
\begin{align}
    \rho_b(\vec{\lambda}_k|\vec{y}_b)=\frac{1}{2\pi} |\vec{y}_b \cdot \vec{\lambda}_k| \, .
\end{align}
Since he is choosing $\vec{y}_b$ with probability $p_b$ the total distribution of the chosen vector $\vec{\lambda}_k$ is:
\begin{align}
    \rho(\vec{\lambda}_k|\{B_b\})=\sum_{b} p_b \ \rho_b(\vec{\lambda}_k|\vec{y}_b)=\frac{1}{2\pi} \sum_{b} p_b \ |\vec{y}_b \cdot \vec{\lambda}_k| \, .
\end{align}

\textit{2. Distribution of $\vec{\lambda}$ after "Step 4":}\\
Now Bob checks the received bit $c_k=H(\vec{x}\cdot \vec{\lambda}_k)$. He flips his chosen vector $\vec{\lambda}_k\rightarrow -\vec{\lambda}_k$ if and only if the received bit is zero. As a result, the distribution becomes:
\begin{align}
    \rho(\vec{\lambda}_k|\vec{x},\{B_b\})=2\cdot H(\vec{x}\cdot \vec{\lambda}_k)\cdot \rho(\vec{\lambda}_k|\{B_b\}) =\frac{H(\vec{x}\cdot \vec{\lambda}_k)}{\pi} \sum_{b} p_b \ |\vec{y}_b \cdot \vec{\lambda}_k| \, .
\end{align}
To see that this is true, note that if $H(\vec{x}\cdot \vec{\lambda}_k)=1$, Bob does not flip the vector $\vec{\lambda}_k$ and the distribution remains unchanged. If $H(\vec{x}\cdot \vec{\lambda}_k)=0$, Bob flips the vector. However, the distribution $\rho(\vec{\lambda}_k|\{B_b\})$ is point symmetric:
\begin{align}
    \rho(-\vec{\lambda}_k|\{B_b\})=\frac{1}{2\pi} \sum_{b} p_b \ |-\vec{y}_b \cdot \vec{\lambda}_k|=\frac{1}{2\pi} \sum_{b} p_b \ |\vec{y}_b \cdot \vec{\lambda}_k|=\rho(\vec{\lambda}_k|\{B_b\}) \, ,
\end{align}
from which the above expression follows. From here one, we can drop the index $k$ in $\vec{\lambda}_k$. We show below (Lemma~\ref{lemmaidentity}) that $\sum_{b} p_b \ |\vec{y}_b \cdot \vec{\lambda}|=2 \sum_{b} p_b \ \Theta(\vec{y}_b \cdot \vec{\lambda})$ and we use this to rewrite the distribution $\rho(\vec{\lambda}|\vec{x},\{B_b\})$ into:
\begin{align}
    \rho(\vec{\lambda}|\vec{x},\{B_b\})=\frac{2H(\vec{x}\cdot \vec{\lambda})}{\pi}\sum_{b} \ p_b\  \Theta(\vec{y}_b\cdot \vec{\lambda}) \, . \label{EQA8}
\end{align}

\textit{3. Total probability that Bob outputs $b$ in "Step 5":}\\
Finally, we are in a position to calculate the total probability that Bob outputs $b$ in "Step 5". Here, we use the expressions given in \eqref{EQA1} and \eqref{EQA8} to obtain:
\begin{align}
    p(b|\vec{x},\{B_b\})&=\int_{S_2}p_B(b|\{B_b\},\vec{\lambda})\cdot \rho(\vec{\lambda}|\vec{x},\{B_b\}) \  \mathrm{d}\vec{\lambda} =\frac{2p_b}{\pi} \int_{S_2} H(\vec{x}\cdot \vec{\lambda}) \cdot \  \Theta(\vec{y}_b\cdot \vec{\lambda})  \  \mathrm{d}\vec{\lambda} =p_b(1+\vec{x}\cdot \vec{y}_b) \, .
\end{align}
We evaluate the integral in Lemma~\ref{lemmaintegral} below. This equals exactly the required quantum statistics.
\end{proof}

\subsection{Evaluation of the integral}
\begin{lemma}\label{lemmaintegral}
Given two normalized vectors $\vec{x},\vec{y}\in \mathbb{R}^3$ on the unit sphere $S_2$, it holds that:
\begin{align}
    \frac{1}{\pi}\int_{S_2} H(\vec{x}\cdot \vec{\lambda})\cdot \ \Theta(\vec{y}\cdot \vec{\lambda})\  \mathrm{d}\vec{\lambda}=\frac{1}{2}(1+\vec{x}\cdot \vec{y}) \, ,
\end{align}
where $H(z)$ is the Heaviside function ($H(z)=1$ if $z\geq 0$ and $H(z)=0$ if $z< 0$) and $\Theta(z):=H(z)\cdot z$.
\end{lemma}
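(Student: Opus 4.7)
My plan is to evaluate the integral without choosing coordinates, by exploiting the identity $\Theta(z) = \tfrac{1}{2}(z + |z|)$, which decomposes the integrand into a signed linear piece and an absolute-value piece that are each easy to handle by symmetry.

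Concretely, the first step is to write
\begin{equation}
\int_{S_2} H(\vec{x}\cdot\vec{\lambda})\,\Theta(\vec{y}\cdot\vec{\lambda})\,\mathrm{d}\vec{\lambda}
= \tfrac{1}{2}\!\int_{S_2} H(\vec{x}\cdot\vec{\lambda})(\vec{y}\cdot\vec{\lambda})\,\mathrm{d}\vec{\lambda}
+ \tfrac{1}{2}\!\int_{S_2} H(\vec{x}\cdot\vec{\lambda})\,|\vec{y}\cdot\vec{\lambda}|\,\mathrm{d}\vec{\lambda}.
\end{equation}
For the first integral, the integrand is linear in $\vec{y}$, so the result is a linear functional of $\vec{y}$. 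Since the measure and the factor $H(\vec{x}\cdot\vec{\lambda})$ are invariant under any rotation fixing $\vec{x}$, the result must be proportional to $\vec{x}\cdot\vec{y}$, i.e.\ equal to $c\,(\vec{x}\cdot\vec{y})$ for some constant $c$. I pin down $c$ by taking $\vec{y}=\vec{x}$ and computing the elementary upper-hemisphere integral $\int_0^{2\pi}\!\!\int_0^{\pi/2}\cos\theta\sin\theta\,\mathrm{d}\theta\,\mathrm{d}\phi = \pi$, giving $c = \pi$.

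For the second integral, the trick is to use the antipodal substitution $\vec{\lambda}\to -\vec{\lambda}$: the measure is invariant and $|\vec{y}\cdot\vec{\lambda}|$ is even, so
\begin{equation}
\int_{S_2} H(\vec{x}\cdot\vec{\lambda})\,|\vec{y}\cdot\vec{\lambda}|\,\mathrm{d}\vec{\lambda}
= \int_{S_2} H(-\vec{x}\cdot\vec{\lambda})\,|\vec{y}\cdot\vec{\lambda}|\,\mathrm{d}\vec{\lambda}.
\end{equation}
Adding the two and using $H(z)+H(-z)=1$ almost everywhere gives $2\!\int_{S_2} H(\vec{x}\cdot\vec{\lambda})|\vec{y}\cdot\vec{\lambda}|\,\mathrm{d}\vec{\lambda} = \int_{S_2}|\vec{y}\cdot\vec{\lambda}|\,\mathrm{d}\vec{\lambda}$. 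The right-hand side is $O(3)$-invariant in $\vec{y}$; taking $\vec{y}=\hat{z}$ gives $2\pi\!\int_0^\pi|\cos\theta|\sin\theta\,\mathrm{d}\theta = 2\pi$. Hence this piece equals $\pi$, independently of the angle between $\vec{x}$ and $\vec{y}$.

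Assembling the two contributions yields $\tfrac{1}{2}\pi(\vec{x}\cdot\vec{y}) + \tfrac{1}{2}\pi$, and dividing by $\pi$ gives $\tfrac{1}{2}(1+\vec{x}\cdot\vec{y})$, as claimed. The only mildly non-obvious step is recognising that the absolute-value part is a constant via the antipodal symmetry; once that is noticed, everything else reduces to one-variable calculus. An alternative route, if one prefers brute force, is to align $\vec{x}=\hat{z}$ and $\vec{y}=(\sin\alpha,0,\cos\alpha)$ and compute the lune integral directly, but the symmetry approach avoids the case analysis in $\phi$ that arises from the condition $\sin\alpha\sin\theta\cos\phi + \cos\alpha\cos\theta \ge 0$.
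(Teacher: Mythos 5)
Your proof is correct, but it takes a genuinely different route from the paper's. The paper proves the lemma by fixing coordinates (placing $\vec{x}$ and $\vec{y}$ in the $xy$-plane with relative angle $\beta$) and computing the resulting lune integral directly in spherical coordinates, obtaining $\frac{1}{2}(1+\cos\beta)$. You instead use the algebraic decomposition $\Theta(z)=\tfrac{1}{2}(z+|z|)$ to split the integral into an odd (linear) piece and an even (absolute-value) piece, and then handle each purely by symmetry: the linear piece defines a vector $\int_{S_2}H(\vec{x}\cdot\vec\lambda)\,\vec\lambda\,\mathrm d\vec\lambda$ that must be proportional to $\vec x$ by rotational invariance about $\vec x$, while the absolute-value piece is $\vec x$-independent by antipodal symmetry and equals half the full-sphere integral of $|\vec y\cdot\vec\lambda|$. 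This is coordinate-free, avoids parameterizing the lune, and cleanly explains why the angular dependence is linear in $\cos\beta$ rather than having this fall out of a computation. It is also pleasing that your decomposition is essentially the same identity ($z=\Theta(z)-\Theta(-z)$ and $|z|=\Theta(z)+\Theta(-z)$) that the paper uses elsewhere, in Lemma~\ref{lemmaidentity}, to prove that $\sum_b p_b|\vec y_b\cdot\vec\lambda|=2\sum_b p_b\Theta(\vec y_b\cdot\vec\lambda)$; you are applying it to the continuous integral rather than the discrete POVM sum. The paper's direct computation is perhaps shorter to write out once one trusts the coordinate choice, but your argument is more transparent and would generalize more easily to, say, higher-dimensional spheres, where the linear piece still gives $c\,(\vec x\cdot\vec y)$ and the even piece is still a constant.
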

\begin{proof}
Note that both functions in the integral $H(\vec{x}\cdot \vec{\lambda})$ and $\Theta(\vec{y}\cdot \vec{\lambda})$ have support in only one half of the total sphere (the hemisphere centred around $\vec{x}$ and $\vec{y}$, respectively). For example, if $\vec{y}=-\vec{x}$ these two hemispheres are exactly opposite of each other and the integral becomes zero. For all other cases, we can observe that the value of the integral depends only on the angle between $\vec{x}$ and $\vec{y}$, because the whole expression is spherically symmetric. Therefore, it is enough to evaluate the integral for $\vec{x}=(0,1,0)^T$ and $\vec{y}=(-\sin{\beta},\cos{\beta},0)^T$, where we can choose without loss of generality $0\leq\beta\leq \pi$. Furthermore, we can use spherical coordinates for $\vec{\lambda}=(\sin{\theta}\cdot \cos{\phi}, \sin{\theta}\cdot \sin{\phi}, \cos{\theta})$ (note that $\vert\vec{\lambda}\vert =1$). With this choice of coordinates, the region in which both factors have non-zero support becomes exactly $\beta\leq \phi \leq \pi$ (at the same time, $\theta$ is unrestricted, $0 \leq \theta \leq \pi$). More precisely, $0\leq \phi \leq \pi$ is the support for $H(\vec{x}\cdot \vec{\lambda})$ and $\beta\leq \phi \leq \pi + \beta$ is the support for $\Theta(\vec{y}\cdot \vec{\lambda}$). In this way, the integral becomes:
\begin{align}
    \frac{1}{\pi}\int^{2\pi}_{0} \int^{\pi}_{0} H(\vec{x}\cdot \vec{\lambda})\cdot \ \Theta(\vec{y}\cdot \vec{\lambda}) \cdot \sin{\theta}\  \mathrm{d}\theta  \ \mathrm{d}\phi =\frac{1}{\pi}\int^{\pi}_{\beta} \int^{\pi}_{0} \ \sin{\phi} \cdot \sin^2{\theta}\  \mathrm{d}\theta  \ \mathrm{d}\phi =\frac{1}{2}(1+\cos{\beta})=\frac{1}{2}(1+\vec{x}\cdot \vec{y}) \, .
\end{align}
\end{proof}
It was recognized many times in the literature \cite{gisingisin1999, cerf2000, degorre2005} that the last expression exactly reproduces the statistics of measurements on qubits. However, in previous protocols, Alice was choosing a vector to create a distribution according to $\Theta(\vec{x}\cdot \vec{\lambda})$ and Bob outputs according to $H(\vec{y}\cdot \vec{\lambda})$. Here, we use the self-duality of quantum mechanics, which allows us to interchange the roles of states and measurements. In this sense, instead of Alice, Bob is choosing a vector to create a distribution like $\Theta(\vec{y}_b\cdot \vec{\lambda})$ and Alice contributes the term $H(\vec{x}\cdot \vec{\lambda})$ by telling Bob to flip that vector or not.

\subsection{Proof of a useful identity}
For the following Lemma, it is important to notice that for every POVM it holds that $\sum_{b} p_b\ \vec{y}_b=\vec{0}$. This follows from $\sum_b B_b=\mathds{1}$ with $B_b=2p_b \ketbra{\vec{y}_b}$, where $\sum_b p_b=1$ and $\ketbra{\vec{y}_b}=(\mathds{1}+\vec{y}_b\cdot \vec{\sigma})/2$:
\begin{align}
    \mathds{1}=\sum_b 2p_b \ketbra{\vec{y}_b}=
    \sum_b p_b (\mathds{1}+\vec{y}_b\cdot \vec{\sigma})=
    \sum_b p_b \ \mathds{1}+ \sum_b p_b\  \vec{y}_b\cdot \vec{\sigma}=\mathds{1}+ \sum_b p_b\  \vec{y}_b\cdot \vec{\sigma}&&\implies&&\sum_b p_b\ \vec{y}_b\cdot \vec{\sigma}=0  \, .
\end{align}
The last equation can only hold if $\sum_{b} p_b\ \vec{y}_b=\vec{0}$.

\begin{lemma}\label{lemmaidentity}
Given a set of vectors $\vec{y}_b\in S_2$ that satisfy $\sum_{b} p_b\ \vec{y}_b=\vec{0}$ and the function $\Theta(z)$, which is defined by $\Theta(z)=z$ if $z\geq 0$ and $\Theta(z)=0$ if $z<0$, it holds for every $\vec{\lambda}\in S_2$ that:
\begin{align}
    \sum_{b} p_b \ |\vec{y}_b \cdot \vec{\lambda}|=2 \sum_{b} p_b \ \Theta(\vec{y}_b \cdot \vec{\lambda}) \, .
\end{align}
\end{lemma}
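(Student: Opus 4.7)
The plan is to exploit the elementary pointwise identity $|z| = 2\Theta(z) - z$, which holds for every real $z$. Indeed, when $z \geq 0$ we have $\Theta(z) = z$, so $2\Theta(z) - z = z = |z|$; and when $z < 0$ we have $\Theta(z) = 0$, so $2\Theta(z) - z = -z = |z|$. This rewriting converts the absolute value, which is awkward because it is not linear, into the sum of a piece involving $\Theta$ and a piece that is linear in $\vec{y}_b$.

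Applying this identity with $z = \vec{y}_b \cdot \vec{\lambda}$ and multiplying by $p_b$, I would then sum over $b$ to obtain
\begin{align}
\sum_b p_b\, |\vec{y}_b \cdot \vec{\lambda}|
= 2\sum_b p_b\, \Theta(\vec{y}_b \cdot \vec{\lambda}) \;-\; \vec{\lambda} \cdot \Bigl(\sum_b p_b\, \vec{y}_b\Bigr).
\end{align}
At this point the hypothesis $\sum_b p_b\, \vec{y}_b = \vec{0}$ (which was derived just before the lemma from the POVM normalization $\sum_b B_b = \mathds{1}$) kills the second term, yielding exactly the claimed equality.

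There is essentially no obstacle here: the only nontrivial ingredient is the pointwise identity, which is verified by a two-case check, and linearity of the inner product in $\vec{y}_b$. It is worth noting that the POVM assumption is used in full strength: without $\sum_b p_b \vec{y}_b = \vec{0}$ the identity would generically fail, since the linear correction term would not vanish.
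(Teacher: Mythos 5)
Your proof is correct and rests on the same underlying algebra as the paper's: the paper invokes the two identities $z = \Theta(z) - \Theta(-z)$ and $|z| = \Theta(z) + \Theta(-z)$ in sequence, which when added give precisely your single identity $|z| = 2\Theta(z) - z$, and both arguments then use linearity and $\sum_b p_b \vec{y}_b = \vec{0}$ to kill the linear remainder. Your version is a slightly more compact packaging of the same calculation, avoiding the intermediate introduction of $\Theta(-\vec{y}_b\cdot\vec{\lambda})$, but it is not a genuinely different route.
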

\begin{proof}
First we prove that $\sum_{b} p_b\ \Theta(\vec{y}_b\cdot \vec{\lambda})=\sum_{b} p_b\ \Theta(-\vec{y}_b\cdot \vec{\lambda})$. Here, we use that $z=\Theta(z)-\Theta(-z)$ (for all $z\in \mathbb{R}$):
\begin{align}
    \vec{0}&=\sum_{b} p_b\ \vec{y}_b\ \implies \
    0=\vec{0}\cdot \vec{\lambda}=\sum_{b} p_b\ \vec{y}_b\cdot \vec{\lambda}=\sum_{b} p_b\ (\Theta(\vec{y}_b\cdot \vec{\lambda})-\Theta(-\vec{y}_b\cdot \vec{\lambda}))=\sum_{b} p_b\ \Theta(\vec{y}_b\cdot \vec{\lambda})-\sum_{b} p_b\ \Theta(-\vec{y}_b\cdot \vec{\lambda}) \, .
\end{align}
In the second step, we use this observation and $|z|=\Theta(z)+\Theta(-z)$ (for all $z\in \mathbb{R}$) to calculate:
\begin{align}
    \sum_{b} p_b \ |\vec{y}_b \cdot \vec{\lambda}|&=\sum_{b} p_b \ (\Theta(\vec{y}_b \cdot \vec{\lambda})+\Theta(-\vec{y}_b \cdot \vec{\lambda}))=\sum_{b} p_b \ \Theta(\vec{y}_b \cdot \vec{\lambda})+\sum_{b} p_b \ \Theta(-\vec{y}_b \cdot \vec{\lambda})=2 \sum_{b} p_b \ \Theta(\vec{y}_b \cdot \vec{\lambda}) \, .
\end{align}
\end{proof}

\section{No classical simulation with a one-bit part}\label{apponebit}
In this section, we show that every protocol that simulates a qubit in a PM scenario cannot have a part in which Alice communicates only a single bit to Bob. Interestingly, for our argument it is enough to consider only projective measurements. Hence, we can write the two projection operators for Bob as $\ketbra*{\pm\vec{y}}=(\mathds{1}\pm\vec{y}\cdot \vec{\sigma})/2$. As before, Alice can choose an arbitrary qubit state $\rho=(\mathds{1}+\vec{x}\cdot \vec{\sigma})/2$ that we simply denote with its Bloch vector $\vec{x}$. With that notation, in a classical protocol that simulates a qubit in a PM scenario, Bob has to output $b=\pm 1$ with probability:
\begin{align}
    p^{PM}(b|\vec{x},\vec{y})=\frac{1}{2}(1+b\ \vec{x}\cdot \vec{y}) \, .
\end{align}
We show that, given a protocol that simulates the qubit in the PM scenario with a non-zero one-bit part exists, it can be rewritten into a protocol that simulates the singlet with a non-zero local part. The latter is prohibited by the result of Elitzur, Popescu and Rohrlich \cite{elitzur1992} (see also Barrett et al.~\cite{barrett2006}) and our hypothesis follows by contradiction. To fix the notation, if Alice and Bob want to reproduce the statistics of local projective measurements on the singlet state $\ket{\Psi^-}=(\ket{01}-\ket{10})/\sqrt{2}$, they can choose measurement projectors $\ketbra*{\pm\vec{x}}=(\mathds{1}\pm\vec{x}\cdot \vec{\sigma})/2$ (for Alice) and $\ketbra*{\pm\vec{y}}=(\mathds{1}\pm\vec{y}\cdot \vec{\sigma})/2$ (for Bob). Then, the task becomes to output $a,b=\pm 1$ with probabilities:
\begin{align}
    p^{\Psi^-}(a,b|\vec{x},\vec{y})=\frac{1}{4}(1-(b\cdot a) \ \vec{x}\cdot \vec{y}) \, .
\end{align}
The similar form of these two expressions already suggests a connection between a protocol that simulates a qubit in a PM scenario and a protocol that simulates the singlet. We use the index "PM" and "$\Psi^-$" to distinguish theses two scenarios. We want to mention that the following statement also covers the scenario of no communication in some fraction of rounds. This is just a special case of a one-bit strategy in which Bob's response does not depend on the received message.

\begin{lemma}
Given a protocol that exactly simulates any qubit strategy in a prepare-and-measure scenario. The fraction of rounds in which Alice is communicating only a single bit to Bob has measure zero. More precisely, we can decompose such a protocol into:
\begin{align}
   p^{PM}(b|\vec{x},\vec{y})=\int_\lambda \text{d}\lambda \,  \pi(\lambda) \sum_{c=\pm 1} p_A(c|\vec{x},\lambda) p_B(b|\vec{y},c,\lambda)+ \int_{\tilde{\lambda}} \text{d}{\tilde{\lambda}} \,  \pi(\tilde{\lambda}) \sum_{m=1}^d p_A(m|\vec{x},\tilde{\lambda}) p_B(b|\vec{y},m,\tilde{\lambda}) \, ,
\end{align}
and it has to hold that $\int_\lambda \text{d} \lambda \, \pi(\lambda)=0$. Here, the first term are all the strategies that can be implemented with a single bit $c=\pm 1$ of communication and the second term contains all the strategies that require a longer message $m$ (with $d>2$).
\end{lemma}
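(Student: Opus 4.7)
The approach is a proof by contradiction, leveraging the EPR2 decomposition: the singlet state has zero local (no-communication) weight. I would assume the one-bit part of the PM protocol carries strictly positive weight $w:=\int_\lambda \text{d}\lambda\,\pi(\lambda)>0$ and derive a local part of weight $w$ in a simulation of the singlet, which is the desired contradiction.

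First, I would adapt the hypothesized PM simulation to a simulation of local projective measurements on the singlet via the standard reduction already employed in the main text: Alice, given $\vec{x}$, samples $a\in\{\pm 1\}$ uniformly, computes Bob's post-measurement state $-a\vec{x}$, and invokes the PM protocol to transmit it. Exactness of the PM simulation implies the joint distribution is exactly $p^{\Psi^-}(a,b|\vec{x},\vec{y})=\tfrac{1}{4}(1-ab\,\vec{x}\cdot\vec{y})$. The two-part decomposition of the PM protocol lifts to a decomposition $p^{\Psi^-}=w\,Q+(1-w)\,R$ of the singlet, with $Q$ a normalized distribution inherited from the one-bit piece.

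The key step is to rewrite $Q$ as a local distribution. The idea is to use Alice's extra output $a$---which has no analog in the PM scenario---to encode the one bit $c$ that she would have had to send. Concretely, one promotes $c$ to an auxiliary shared random bit $c^\ast\in\{\pm 1\}$, lets Bob respond as if he had received $c^\ast$, and lets Alice choose $a$ so that her PM message function is consistent with the prescribed $c^\ast$. Because $a$ is marginally uniform and independent of the shared randomness $\lambda$, and because the singlet target $p^{\Psi^-}$ is invariant under the simultaneous flip $(a,\vec{x})\mapsto(-a,-\vec{x})$, averaging over $\lambda$ and $c^\ast$ reproduces $Q$ exactly, exhibiting it as a local distribution. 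The decomposition $p^{\Psi^-}=w\,Q+(1-w)\,R$ then gives a local part of strictly positive weight $w$ for the singlet, contradicting the EPR2 theorem. Hence $w=0$, as claimed. The more general statement covering rounds with strictly less than one bit of communication (including rounds with no communication at all) is immediate: a no-communication piece of the PM protocol, fed into the same reduction, yields a local piece of the singlet simulation directly, with no message to absorb.

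The main obstacle is verifying the locality of $Q$. The construction is transparent for shared-randomness values $\lambda$ in which the PM message is not invariant under $a\mapsto -a$, since Alice can then uniquely choose her output to realize any prescribed $c^\ast$. The technical subtlety is the degenerate regime where, for some $\lambda$, Alice's message depends on $\vec{x}$ but not on $a$: there $c$ is fixed by $\vec{x}$ alone and Bob's response would seem to depend on $\vec{x}$ in a signaling way. Showing that these contributions nonetheless average to a local distribution, by carefully grouping $\lambda$'s and using the correctness of the full PM protocol, is the technical heart of the argument.
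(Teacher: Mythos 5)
Your overall approach matches the paper's exactly: reduce the hypothetical one-bit PM part to a local part in a singlet simulation, then invoke the EPR2 theorem of Elitzur, Popescu and Rohrlich to get a contradiction. The construction of the local strategy via an auxiliary shared bit (your $c^\ast$, the paper's $r$) and the consistency constraint on $a$ is also the same mechanism.

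However, the way you frame the ``degenerate regime'' as the unresolved technical heart, to be ``handled by carefully grouping $\lambda$'s,'' misses the point of the key step. That regime does not occur and cannot be averaged away---it must be ruled out outright, and this is precisely what exactness of the PM simulation buys you. Concretely: for any $\lambda$ in the support of the one-bit part, take $\vec{y}=\vec{x}$. Exact simulation forces Bob to output $b=+1$ deterministically for $\vec{x}$ and $b=-1$ deterministically for $-\vec{x}$; since his response depends only on $(\vec{y},c,\lambda)$, the two states $\pm\vec{x}$ must map to distinct bits, giving $p_A(c|\vec{x},\lambda)=p_A(-c|-\vec{x},\lambda)$. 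An analogous argument (varying $\vec{x}$ for fixed $\vec{y}$) forces $p_B(b|\vec{y},c=+1,\lambda)=p_B(-b|\vec{y},c=-1,\lambda)$. With both antisymmetry relations in hand, the local rewriting goes through for \emph{every} $\lambda$ with no grouping needed: Alice samples $c\sim p_A(\cdot|\vec{x},\lambda)$ and outputs $a=-rc$, Bob samples $b_+\sim p_B(\cdot|\vec{y},c=+1,\lambda)$ and outputs $b=rb_+$, and the relations guarantee this averages (over the shared bit $r$) to the one-bit contribution to $p^{\Psi^-}$. So the missing idea is not a grouping argument but the derivation of these two antisymmetry constraints from exactness; once you have them, the rest of your plan is exactly the paper's proof.
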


\begin{proof}

To recapitulate, in those rounds, where Alice is allowed to send only a single bit to Bob, Alice's bit $c=\pm 1$ can depend only on her input $\vec{x}$ and the shared random variable $\lambda$ (denoted as $p_A(c|\vec{x}, \lambda)$) and Bob has to determine his output $b$ based on his input $\vec{y}$, the message $c$ and the shared random variable $\lambda$ (denoted as $p_B(b|\vec{y}, c, \lambda)$). The important observation is, given that Alice sends the bit $c=+1$, if she wants to transmit the state $\vec{x}$, she necessarily has to send the bit $c=-1$, if she wants to transmit the state $-\vec{x}$. To see that this must be true, consider that Bob chooses in that round the measurement basis $\vec{y}=\vec{x}$. In that situation, he necessarily has to discriminate between the two states $\vec{x}$ and $-\vec{x}$. This is not possible if Alice sends the same bit for both states $\vec{x}$ and $-\vec{x}$. Therefore, $p_A(c|\vec{x},\lambda)=p_A(-c|-\vec{x},\lambda)$.

On the other hand, given that Bob chooses the measurement basis $\vec{y}$ and wants to produce the output $b$, it might be that Alice has chosen the state $\vec{x}=\vec{y}$. If $c=+1$ is the message for the state $\vec{x}=\vec{y}$ in this round, it has to hold that $p_B(b=+1|\vec{y},c=+1,  \lambda)=1$ and since $c=-1$ is then necessarily the message for the state $-\vec{x}=-\vec{y}$ it has to hold that $p_B(b=-1| \vec{y},c=-1, \lambda)=1$. Analogously, it is also possible that $c=-1$ is the message for the state $\vec{x}=\vec{y}$ and $c=+1$ is the message for the state $-\vec{x}=-\vec{y}$, in which a similar argument leads to $p_B(b=+1| \vec{y},c=+1, \lambda)=p_B(b=-1| \vec{y},c=-1, \lambda)=0$. In any case, it has to hold that $p_B(b| \vec{y},c=+1, \lambda)=p_B(-b| \vec{y},c=-1, \lambda)$.

Now they can use a protocol that simulates a qubit in a PM scenario to simulate the singlet state \cite{cerf2000}. More precisely, Alice chooses her measurement basis $\vec{x}$ and tosses a balanced coin (heads and tails with probability $1/2$ each). If the coin shows heads, she outputs $a=+1$ and uses the PM protocol from above to send the state $-\vec{x}$ to Bob, whereas if the coin shows tails, she outputs $a=-1$ and uses the protocol to sends the state $+\vec{x}$ to Bob (be aware of the anti-correlation in the singlet state). This procedure simulates the singlet state since:
\begin{align}
   p^{\Psi^-}(a,b|\vec{x},\vec{y})&=\frac{1}{2}\cdot \delta_{a,+1}\cdot p^{PM}(b|-\vec{x},\vec{y})+\frac{1}{2}\cdot \delta_{a,-1}\cdot p^{PM}(b|+\vec{x},\vec{y})\\
   &=\frac{1}{4}\cdot \delta_{a,+1}\cdot (1-b\ \vec{x}\cdot \vec{y})+\frac{1}{4}\cdot \delta_{a,-1}\cdot (1+b\ \vec{x}\cdot \vec{y})\\
   &=\frac{1}{4}(1-(b\cdot a)\ \vec{x}\cdot \vec{y}) \, .
\end{align}
However, we can also write down the explicit protocol:
\begin{align}
   p^{\Psi^-}(a,b|\vec{x},\vec{y})=&\frac{1}{2}\cdot \delta_{a,+1}\cdot p^{PM}(b|-\vec{x},\vec{y})+\frac{1}{2}\cdot \delta_{a,-1}\cdot p^{PM}(b|+\vec{x},\vec{y})\\
\begin{split}
   =&\frac{1}{2}\cdot \delta_{a,+1}\cdot\left(\int_\lambda \text{d} \lambda \, \pi(\lambda) \sum_{c=\pm 1} p_A(c|-\vec{x},\lambda) p_B(b|\vec{y},c,\lambda)+\int_{\tilde{\lambda}} \text{d}\tilde{\lambda} \,  \pi(\tilde{\lambda}) \sum_{m=1}^d p_A(m|-\vec{x},\tilde{\lambda}) p_B(b|\vec{y},m,\tilde{\lambda})\right)\\
   &+\frac{1}{2}\cdot \delta_{a,-1}\cdot\left(\int_\lambda \text{d}\lambda \,  \pi(\lambda) \sum_{c=\pm 1} p_A(c|+\vec{x},\lambda) p_B(b|\vec{y},c,\lambda)+\int_{\tilde{\lambda}} \text{d}{\tilde{\lambda}} \,  \pi(\tilde{\lambda}) \sum_{m=1}^d p_A(m|+\vec{x},\tilde{\lambda}) p_B(b|\vec{y},m,\tilde{\lambda})\right) \, . \label{explicitprotocol}
   \end{split}
\end{align}
As before, the protocol to simulate the singlet is decomposed into a one-bit part and a part that requires more communication. The one bit part is the sum of all the terms that contain $\lambda$ (and not $\tilde{\lambda}$) as the shared variable. Together, they can be written as $\int_{\lambda} \text{d}\lambda \,  \pi (\lambda)\,  p^{\Psi^-}(a,b|\vec{x}, \vec{y}, \lambda)$ where:
\begin{align}
\begin{split}
    p^{\Psi^-}(a,b|\vec{x}, \vec{y}, \lambda):=&\frac{1}{2}\cdot \delta_{a,+1}\cdot ( p_A(c=+1|-\vec{x},\lambda) p_B(b|\vec{y},c=+1,\lambda)+p_A(c=-1|-\vec{x},\lambda) p_B(b|\vec{y},c=-1,\lambda))\\
    &+\frac{1}{2}\cdot \delta_{a,-1}\cdot (p_A(c=+1|+\vec{x},\lambda) p_B(b|\vec{y},c=+1,\lambda)+p_A(c=-1|+\vec{x},\lambda) p_B(b|\vec{y},c=-1,\lambda)) \, .
\end{split}
\end{align}
With the above relations $p_A(c|\vec{x},\lambda)=p_A(-c|-\vec{x},\lambda)$ and $p_B(b| \vec{y},c=+1, \lambda)=p_B(-b| \vec{y},c=-1, \lambda)$, we can rewrite this expression into:
\begin{align}
\begin{split}
    p^{\Psi^-}(a,b|\vec{x}, \vec{y}, \lambda)=&\frac{1}{2}\cdot \delta_{a,+1}\cdot p_A(c=-1|\vec{x},\lambda) p_B(b|\vec{y},c=+1,\lambda)+\frac{1}{2}\cdot \delta_{a,+1}\cdot p_A(c=+1|\vec{x},\lambda) p_B(-b|\vec{y},c=+1,\lambda)\\
    &+\frac{1}{2}\cdot \delta_{a,-1}\cdot p_A(c=+1|\vec{x},\lambda) p_B(b|\vec{y},c=+1,\lambda)+\frac{1}{2}\cdot \delta_{a,-1}\cdot p_A(c=-1|\vec{x},\lambda) p_B(-b|\vec{y},c=+1,\lambda) \, . \label{appequation}
\end{split}
\end{align}
The important observation is now that these correlations $p^{\Psi^-}(a,b|\vec{x}, \vec{y}, \lambda)$ can be realized with purely local strategies. More precisely, Alice and Bob share an additional random bit $r=\pm 1$ (with probability $1/2$ each). Given her measurement setting $\vec{x}$ and the shared random variable $\lambda$, Alice samples $c=\pm 1$ according to the probabilities $p_A(c|\vec{x}, \lambda)$ (as for the case of the message $c$ in the PM scenario). However, instead of sending the message $c$ to Bob, she outputs $a=-r\cdot c$. At the same time, Bob outputs $b=r\cdot b_{+}$ where $b_{+}$ is sampled according to the probabilities $p_B(b_{+}|\vec{y},c=+1,\lambda)$. If both follow that strategy and $r=+1$, they implement the behaviour
\begin{align}
    p^{\Psi^-}(a,b|\vec{x}, \vec{y}, \lambda, r=+1):=\left(\delta_{a,+1}\cdot p_A(c=-1|\vec{x},\lambda) +\delta_{a,-1} \cdot p_A(c=+1|\vec{x},\lambda)\right)\cdot p_B(b|\vec{y},c=+1,\lambda) \, .
\end{align}
On the other hand, if $r=-1$ they implement
\begin{align}
    p^{\Psi^-}(a,b|\vec{x}, \vec{y}, \lambda, r=-1):=\left(\delta_{a,+1}\cdot p_A(c=+1|\vec{x},\lambda) +\delta_{a,-1} \cdot p_A(c=-1|\vec{x},\lambda)\right)\cdot p_B(-b|\vec{y},c=+1,\lambda) \, .
\end{align}
It is easy to check that the weighted sum of these two expressions equals exactly the expression $p(a,b|\vec{x}, \vec{y}, \lambda)$ given in Eq.~\eqref{appequation}:
\begin{align}
    p^{\Psi^-}(a,b|\vec{x}, \vec{y}, \lambda)=\frac{1}{2}\cdot  p^{\Psi^-}(a,b|\vec{x}, \vec{y}, \lambda, r=+1)+\frac{1}{2}\cdot p^{\Psi^-}(a,b|\vec{x}, \vec{y}, \lambda, r=-1) \, .
\end{align}
Therefore, we have optimized the above protocol given in Eq.~\eqref{explicitprotocol}: Whenever Alice and Bob draw a $\lambda$ that corresponds to a one-bit part for the PM scenario, they can switch to the local strategy if they want to simulate the singlet. In the remaining rounds (according to shared randomness $\tilde{\lambda}$), where Alice was allowed to send more information, they do the same as in the case of the PM scenario: Alice outputs $-r$ and sends the message according to the state $r\vec{x}$. Bob outputs $b$ according to his message $m$ and his measurement basis $\vec{y}$.

Hence, given a simulation of the PM scenario with a non-zero one-bit part exists, we found a simulation of the singlet state with a non-zero local part. Since this is in contradiction with the result of Elitzur, Popescu and Rohrlich \cite{elitzur1992} (see also Barrett et al. \cite{barrett2006}), it proves our hypothesis.

\end{proof}

\section{Linear programming and classical PM scenarios} \label{appendixlinprog}

Let us consider a fixed PM scenario where Alice can prepare $I_A\in\mathbb{N}$ different inputs and Bob has $I_B\in\mathbb{N}$ different measurements with $O_B\in\mathbb{N}$ outcomes each.
The problem of deciding if a set of probabilities $\{p(b|x,y)\}$ can be obtained by Alice sending classical $d_C$-dimensional systems to Bob is phrased as:

\begin{align} \label{LP1}
	\text{given: }& \{p(b|x,y)\}, \; d_C  \\
	\text{find } \quad & \pi, p_A, p_B \\
\text{s.t.:\quad } & p(b|x,y) = \int_\lambda \text{d}\lambda \, \sum_{c=1}^{d_C} \, \pi(\lambda) \, p_A\big(c|x,\lambda\big)\, p_B \big( b|y,c,\lambda\big), \quad \forall b,x,y \label{eq:product}\\
&\pi(\lambda)\geq0, \; \forall \lambda \\
&\int_\lambda  \text{d}\lambda \pi(\lambda) =1 \\
&p_A\big(c|x,\lambda\big)\geq0, \; \forall c,x,\lambda\\
&\sum_{c=1}^{d_C}p_A\big(c|x,\lambda\big)=1, \; \forall x,\lambda,  \\
&p_B \big( b|y,c,\lambda\big)\geq0, \quad \forall b,x,y\\
&\sum_{b=1}^{O_B} p_B\big(b|y,c,\lambda\big)=1, \; \forall y,c,\lambda\, .
\end{align} 
Since in Eq.~\eqref{eq:product} we have a product of the optimisation variables, the above problem is not in a linear programming form. In order to rewrite it as a linear programming, we note that, similarly to Bell nonlocality~\cite{Brunner_2014}, the classical message $c$ sent by Alice may be chosen deterministically for a given $x$ and $\lambda$. This is true because the choice of the set of distributions $\{p_A(c|x,\lambda) \}$ form a polytope where the vertices are given by  deterministic distributions $D_A(c|x,\lambda)$, where $\lambda\in\{1,\ldots,d_C^{I_A}\}$. 
We then define $p_B'\big(b|y,c,\lambda\big):=\pi(\lambda)p_B\big(b|y,c,\lambda\big)$, a transformation which allows us to write the  problem described in Eqs.~\eqref{LP1} as,
\begin{align} \label{LP2}
	\text{given: }& \{p(b|x,y)\}, \; d_C, \; \{D_A(c|x,\lambda)\}  \\
	\text{find } \quad & \pi, p_B' \\
\text{s.t.:\quad } & p(b|x,y) = \sum_{\lambda=1}^{d_C^{I_A}}  \sum_{c=1}^{d_C} \,  D_A\big(c|x,\lambda\big)\, p_B' \big( b|y,c,\lambda\big), \quad \forall b,x,y \\
&p_B' \big( b|y,c,\lambda\big)\geq0, \quad \forall b,y,\lambda\\
&\sum_{b=1}^{O_B} p_B'\big(b|y,c,\lambda\big)=\pi(\lambda), \; \forall y,c,\lambda.
\end{align}
where $\sum_\lambda \pi(\lambda)=1$ follows from the fact that $\sum_b p(b|x,y)=1$. Note that now, all constraints are linear or positivity constraints, hence, the problem in Eqs.~\eqref{LP2} is a linear program.

We remark that, the set of distributions $\{p_B(b|c,y,\lambda \}$ also form a polytope where the vertices are given by  deterministic distributions $D_B(c|x,\lambda)$, where $\lambda\in\{1,\ldots,O_B^{I_B}\}$. Hence, one may construct a different linear program where both Alice and Bob have deterministic response functions. For practical reasons, this is often not a good choice, since it leads to a linear program with a big number of variables. In particular, the variable $\lambda$ would be allowed to take  $d_C^{I_A}O_B^{I_B d_C}$ different values as opposed to $d_C^{I_A}$. However, when considering a scenario where $d_c^{I_A}>O_B^{I_B d_C}$, it might be more efficient to set Bob as the part which performs deterministic strategies and to construct a different linear program by setting $p_A'\big(c|x,\lambda\big):=\pi(\lambda)p_A\big(c|x,\lambda\big)$.

\subsection{Primal formulation in terms of white noise robustness}

The linear program presented in Eqs.~\eqref{LP2} is a simple feasibility problem, since it only requires the existence of a feasible solution. We now adapt this feasibility problem to obtain a robustness optimisation problem. Instead of simply asking whether a set of probabilities $\{p(b|x,y)\}$ may be simulated by classical systems of dimension $d_C$, we look for the critical visibility parameter $\eta\in [0,1]$ such that the probabilities given by $\eta \; p(b|x,y) +(1-\eta)\frac{1}{O_B}$ admit a classical $d_C$-dimensional description. For that, we write the following linear program:
\begin{align} \label{LP3}
	\text{given: }& \{p(b|x,y)\}, \; \{D_A(.|x,\lambda)\}, \; d_C \\
	\max \quad &\eta  \\
\text{s.t.:\quad } & \eta \; p(b|x,y) +(1-\eta)\frac{1}{O_B}= \sum_{c=1}^{d_C}\sum_{\lambda}^{d_C^{I_A}} p_B'(b|y,c,\lambda) D_A(c|x,\lambda), \quad \forall b,y,x \quad \quad \\
	& p_B'(b|y,c,\lambda) \geq 0, \quad \forall b,y,c,\lambda \quad   \hspace*{23.5mm}\hspace*{36mm}\\
	& \sum_b p_B'(b|y,c,\lambda) = \; \pi(\lambda)\, .
\end{align}

\subsection{Classical dimension witness emerging from the dual problem }
We will now show how to obtain a classical dimension witness from the linear program presented in Eqs.~\eqref{LP3}. For the sake of concreteness, we will explicitly obtain the dual form from the Lagrangian method, see \cite{boyd_book} for an introduction. We start by setting the dual variables as:
\begin{align}
	\text{given: }& \{p(b|x,y)\}, \; \{D_A(c|x,\lambda)\}\; d_C \\
	\max \quad &\eta  \\
\text{s.t.:\quad } & \eta \; p(b|x,y) +(1-\eta)\frac{1}{O_B}= \sum_{c=1}^{d_C}\sum_{\lambda=1}^{d_C^{I_A}} p_B'(b|y,c,\lambda) D_A(c|x,\lambda), \quad \forall b,y,x \quad \quad \Big[\text{dual: }\; \gamma(b|x,y)\Big]\\
	& p_B'(b|y,c,\lambda) \geq 0, \quad \forall b,y,c,\lambda \quad   \hspace*{26.8mm}\hspace*{36mm}\Big[\text{dual: }\; \rho(b|y,c,\lambda)\Big]\\
	& \sum_b p_B'(b|y,c,\lambda) = \; \pi(\lambda),  \quad \forall y,c,\lambda \quad  \hspace*{30.55mm}\hspace*{24.5mm}\Big[\text{dual: }\; s(y,c,\lambda)\Big] \, .
\end{align}
	The Lagrangian is then given by
\begin{align}
	L = \eta &+ \sum_{b,x,y}\gamma(b|xy )\Big( \eta \; p(b|x,y)+ \frac{1}{O^B}-\frac{\eta}{O_B} - \sum_{c,\lambda}p_B'(b|y,c,\lambda) D_A(c|x,\lambda) \Big)\\
			  &+ \sum_{b,y,a,\lambda} p_B'(b|y,c,\lambda)\rho(b|y,c,\lambda) \\
			  &+ \sum_{y,c,\lambda} s(y,c,\lambda)\Big( \Big[\sum_b p_B'(b|y,c,\lambda)\Big] - \pi(\lambda) \Big)\, ;
\end{align}
	If we factorise the primal variables we have:
\begin{align}
	L &= \eta\Big(1 + \Big[\sum_{b,x,y} \gamma(b|x,y)p(b|x,y)\Big] - \Big[\sum_{b,x,y} \frac{\gamma(b|x,y)}{O_B}\Big] \Big)\\
&+\sum_{b,y,c,\lambda} p_B'(b|y,c,\lambda)\Big(\Big[-\sum_x\gamma(b|x,y)D_A(c|x,\lambda)\Big]+\rho(b|y,c,\lambda)+s(y,c,\lambda)\Big)\\
 &+\sum_{\lambda}\pi(\lambda) \Big(-\sum_{y,c} s(y,c,\lambda)\Big)\\
  &+\sum_{b,x,y} \frac{\gamma(b|x,y)}{O_B} \, .
\end{align}

This leads to the dual:
\begin{align}
	\text{given: }& \{p(b|x,y)\}, \; \{D_A(c|x,\lambda)\},\; d_C \\
	\min \quad & \sum_{b,x,y} \frac{\gamma(b|x,y)}{O_B}  \\
\text{s.t.:\quad } & \rho(b|y,c,\lambda)\geq0 \quad \forall b,y,c,\lambda \label{eq:rho1}\\
& 1 + \Big[\sum_{b,x,y} \gamma(b|x,y)p(b|x,y)\Big] = \sum_{b,x,y} \frac{\gamma(b|x,y)}{O_B}   \label{eq:gamma2} \\
& \rho(b|y,c,\lambda)=\Big[\sum_x\gamma(b|x,y)D_A(a|x,\lambda)\Big] - s(y,c,\lambda)\quad \forall b,y,c,\lambda \label{eq:rho2}\\
& \sum_{y,a}s(y,c,\lambda)=0 \quad \forall \lambda
\end{align}
and, we can also combine Eq.~\eqref{eq:rho1} with Eq.~\eqref{eq:rho2} to write:
\begin{align}
	\text{given: }& \{p(b|x,y), \; \{D_A(c|x,\lambda)\},\; d_C \\
	\min \quad & \sum_{b,x,y} \frac{\gamma(b|x,y)}{O_B} \label{eq:gamma1} \\
\text{s.t.:\quad } & \sum_x\gamma(b|x,y)D_A(c|x,\lambda)\geq s(y,c,\lambda) \quad \forall b,y,c,\lambda \\
& 1 + \Big[\sum_{b,x,y} \gamma(b|x,y)p(b|x,y)\Big] = \sum_{b,x,y} \frac{\gamma(b|x,y)}{O_B}     \\
& \sum_{y,c}s(y,c,\lambda)=0 \quad \forall \lambda 
\end{align} 
Additionally, in order to have a more explicit hyperplane formulation, we use Eq.~\eqref{eq:gamma1} and Eq.~\eqref{eq:gamma2} to write:
\begin{align} \label{LPdual}
	\text{given: }& \{p(b|x,y)\}, \; \{D_A(c|x,\lambda)\},\; d_C \\
	\min \quad & 1 + \sum_{b,x,y} \gamma(b|x,y)p(b|x,y)  \\
\text{s.t.:\quad } & \sum_x\gamma(b|x,y)D_A(c|x,\lambda)\geq s(y,c,\lambda) \quad \forall b,y,c,\lambda \\
&  \sum_{b,x,y} \frac{\gamma(b|x,y)}{O_B} =  1 + \sum_{b,x,y} \gamma(b|x,y)p(b|x,y)  \\
& \sum_{y,a}s(y,c,\lambda)=0 \quad \forall \lambda
\end{align} 

	We then see that the $\gamma(b|x,y)$ are the coefficients of the inequality which witness a non-classical $d_C$-dimensional behaviour $\{p(b|x,y)\}$. Additionally, since strong duality holds, the solution of the primal and dual coincides. Now, for behaviours $\{p(b|x,y)\}$ which are realisable with classical systems of dimension $d_C$, the visibility $\eta$ respects $\eta\geq1$, hence 
\begin{align}
	1 + \Big[\sum_{b,x,y} \gamma(b|x,y)p(b|x,y)\Big]\geq 1\, ,
\end{align}
and  $\sum_{b,x,y} \gamma(b|x,y)p(b|x,y)\geq0$, with $0$ being the bound of the inequality $\{\gamma(b|x,y)\}$ for $d_C$-dimensional systems. Also, for behaviours $\{p(b|x,y)\}$ which are realisable with classical systems of dimension $d_C$, the visibility $\eta=1$ is attainable, we have that the bound  $\sum_{b,x,y} \gamma(b|x,y)p(b|x,y)=0$ is attainable by classical systems of dimension $d_C$.

\subsection{Heuristic method to find quantum probabilities without a $d_C$-dimensional classical simulation}

A brute force method to generate quantum probabilities is simply to sample random states $\rho_x$ and measurements $\{B_{b|y}\}$ and then using linear programming to check weather $p(b|x,y)=\tr(\rho_x \, B_{b|y})$ may be simulated by $d_C$-dimensional classical systems. A more guided strategy may be to consider states and measurements that are  rather uniformly spread. For qubits, one may choose vectors rather uniformly spread in the Bloch sphere and then construct states and projective measurements for it.

In order to find an example of a set of qubit probabilities $p(b|x,y)=\tr(\rho_x \, B_{b|y})$ which makes use of $I_A=6$ states and $I_B=11$ projective measurements, we have chosen states and measurements corresponding to the Thomson problem \cite{Thomson}, a family of vectors on the sphere which is defined for any number of vectors $N\in\mathbb{N}$. In our online repository~\cite{mtqGIT} we provide an implementation for this heuristic method and also the exact qubit states and measurements in which $p(b|x,y)=\tr(\rho_x \, B_{b|y})$ cannot be simulated by classical trits.

\subsection{Efficient algorithm for obtaining the classical bound $C_d$} \label{sec:classical_bound}

We now present a novel and efficient algorithm for obtaining the classical bound $C_d$ for any given set of real numbers $\{\gamma(b|x,y)\}$. Our method is based on the scheme for finding the local bound of Bell inequalities presented in Ref.~\cite{araujo20}. We let $D_A$ and $D_B$ be the set of all deterministic strategies which can be performed by Alice and Bob. Hence, by convexity, we can write:
\begin{align}
C_d:= &\max_\lambda\Bigg[ 	\sum_{b,x,y,c} \gamma(b|x,y) D_A(c|x,\lambda)D_B(b|c,y,\lambda) \Bigg] \\
=&\max_\lambda\Bigg[  \sum_{b,y} D_B(b|c,y,\lambda)  \sum_{x,c} \gamma(b|x,y)  D_A(c|x,\lambda) \Bigg]\\
=&\max_\lambda\Bigg[ \sum_{b,y} \max_b\Big[  \sum_{x,c} \gamma(b|x,y)  D_A(c|x,\lambda)\Big] \Bigg]\, .\\
\end{align}	
We then see that we only need to generate Alice's deterministic strategies and to obtain the largest value of a vector, steps which can be done very efficiently.
\subsection{Computer-assisted proof}

In order to avoid numerical errors from floating point arithmetic, we show how to certify that a set of quantum probabilities given by $p(b|x,y)=\tr(\rho_x\, B_{b|y})$ cannot be simulated by classical systems of dimension $d_C$ by making use of only integers. The first step is to ensure that the probabilities $p(b|x,y)=\tr(\rho_x\, B_{b|y})$ are stored with only integers or fractions, for that we will ensure that the states $\rho_x$ and the measurements given by $B_{b|y}$ do not make use of floating point. We may adapt the Algorithm 1 of Ref.~\cite{bavaresco21} to obtain a quantum state, $\rho_\texttt{OK}$ which is described by fractions of integers, and it is close to $\rho_\text{float}$:

\paragraph*{\textbf{\emph{Algorithm 1:}}}
\begin{enumerate}
    \item 
    \texttt{Construct the non-floating-point matrix 
    $\rho_{\text{frac}}$ 
    by truncating the matrix
    $\rho_\text{float}$  }  
    \item    \texttt{Define the matrix} $\displaystyle{\rho:=\frac{\rho_\texttt{frac}+\left(\rho_\texttt{frac}\right)^\dagger}{2}}$
    \texttt{to obtain a self-adjoint matrix $\rho$}
    \item  \texttt{Find a coefficient $\eta$ such that $\rho':=\eta \rho + (1-\eta)\id$ is positive semidefinite} 
    \item  \texttt{Output the operator }
$\displaystyle{\rho_\texttt{OK}= \frac{\rho'}{\tr(\rho')}}$.
\end{enumerate}
Notice that checking whether a matrix with integers is positive semidefinite may be done efficiently by the
Sylvester’s criterion.

We now adapt Algorithm 2 of Ref.~\cite{bavaresco21} to transform any set of floating point POVM $\{B_{b,\texttt{float}}\}_{b=1}^{O_B}$ into a POVM described by fractions of integers.

\paragraph*{\textbf{\emph{Algorithm 2:}}}
\begin{enumerate}
    \item 
    \texttt{Construct the non-floating-point matrices 
    $B_{b,\text{frac}}$ 
    by truncating the matrices
    $B_{b,\text{float}}$ } 
    \item
    \texttt{For the outcomes $b\in\{1,\ldots,O_B-1\}$, define the matrix} $\displaystyle{B_b:=\frac{B_{b,\texttt{frac}}+\left(B_{b,\texttt{frac}}\right)^\dagger}{2}}$\texttt{. For $b=O_B$, define $B_{O_B}:=\id -\sum_{b=1}^{O_B-1}B_b$.}
    \item  \texttt{Find a coefficient $\eta$ such that the matrices $B'_b:={\eta B_b + (1-\eta)\id}$ are positive semidefinite for every $b\in\{1,\ldots, O_B\}$.} 
    \item  \texttt{Output the operator $B_{b,\texttt{OK}}= B'_b$.}
\end{enumerate}

Now, using a set of probabilities $\{p(b|x,y)\}$ which does not make use of floating point, we can then proceed as follows.
    
    \paragraph*{\textbf{\emph{Algorithm 3:}}}
\begin{enumerate}
    \item 
    \texttt{Solve the dual problem presented in Eqs.~\eqref{LPdual} by standard efficient floating point linear programming methods and obtain the inequality with coefficients $\gamma_{\text{float}}(b|x,y)$. }
    \item
    \texttt{Truncate  $\gamma_{\text{float}}(b|x,y)$ to obtain  $\gamma_{\text{frac}}(b|x,y)$.}
    \item  \texttt{Use the algorithm presented in Section~\ref{sec:classical_bound} to obtain $C_d$, the classical dimension bound for the witness given by $\gamma_{\text{float}}(b|x,y)$ .} 
    \item  \texttt{Verify that
    $\sum_{b,x,y}\gamma_{\text{float}}(b|x,y) p_B(b|x,y)>C_d$.}
\end{enumerate}

A Matlab implementation of all code presented in this section and used in this paper is openly available at the online repository~\cite{mtqGIT}.

\end{document}